\documentclass[letterpaper, 10 pt, conference]{ieeeconf}
\IEEEoverridecommandlockouts
\usepackage{booktabs}
\makeatletter
\let\NAT@parse\undefined
\makeatother
\usepackage{hyperref}
\usepackage{relsize}
\usepackage{amsmath,amssymb,amsfonts,mathtools}
\usepackage{nicematrix, nicefrac}
\usepackage{algorithmic}
\usepackage{textcomp}
\usepackage{xcolor}

\usepackage{booktabs}
\usepackage{graphicx}
\usepackage{epsfig} 
\usepackage{times}  
\usepackage{siunitx}
\NewCommandCopy\OldSI\SI
\RenewDocumentCommand\SI{ O{} m m }{  \nobreak  \OldSI[#1]{#2}{#3}  \nobreak}

\usepackage{tikz}
\usepackage{xstring}
\usepackage{forloop}
\usepackage{colortbl}
\usepackage{soul}
\usepackage{array}

\let\labelindent\relax  \usepackage{enumitem}

\usepackage{multirow} 
\usepackage{multicol}
\usepackage{arydshln}
\usepackage{pgfgantt}

\usepackage{textcomp}
\usepackage{rotating}
\usepackage{float}
\usepackage{algorithmic}
\usepackage{algorithm}
\usepackage[edges]{forest}

\usepackage[caption=false,font=footnotesize]{subfig}
\newcommand{\g}{\textnormal{\textsl{g}}}

\renewcommand{\c}{\mathrm{c}}
\renewcommand{\d}{\nabla}

\newcommand{\s}{{\mathrm{s}}}

\renewcommand{\c}{{\mathrm{c}}}
\renewcommand{\t}{{\mathrm{t}}}

\newcommand{\dotp}[2]{\langle #1,#2\rangle}

\newcommand{\cC}{{\mathcal C}}

\newcommand{\cG}{{\mathcal G}}

\newcommand{\cN}{{\mathcal N}}

\newcommand{\cS}{{\mathcal S}}

\newcommand{\cX}{{\mathcal X}}

\newcommand{\eqdef}{\mathrel{\mathop:}=}

\newcommand{\R}{\mathbb R}

\newcommand{\col}[1]{\mathrm{col}\!\left(#1\right)}

\makeatletter
  \renewcommand*\env@matrix[1][*\c@MaxMatrixCols c]{	\hskip -\arraycolsep
	\let\@ifnextchar\new@ifnextchar
	\array{#1}}
\newcommand*\bigcdot{\mathpalette\bigcdot@{1.5}}
\newcommand*\bigcdot@[2]{\mathbin{\vcenter{\hbox{\scalebox{#2}{$\m@th#1\cdot$}}}}}
\makeatother

\newcounter{excounter}[section]

\newcounter{rmcounter}

\newcounter{dfcounter}

\newcounter{thmcounter}

\newenvironment{thm}[1][\unskip]{	\refstepcounter{thmcounter}
	\vspace{10pt}
	\noindent \textbf{Theorem  \thethmcounter}~#1 \!\tikz\draw[black,fill=black] (0,0) circle (0.9pt); }
{		}

\newcounter{corcounter}[thmcounter]

\newcounter{prcounter}

\newcounter{clcounter}

\usepackage[dvipsnames]{xcolor}

\usepackage{cite}
\usepackage{amsthm}
\usepackage{mathrsfs} 
\usepackage{placeins}
\newtheorem{theorem}{Theorem}
\newtheorem{lemma}[theorem]{Lemma}

\theoremstyle{definition}

\newtheorem{assumption}{Assumption}
\newtheorem{problem}{Problem}

\theoremstyle{definition}
\newtheorem{remark}{Remark}

\usepackage{hyperref}
\hypersetup{
  colorlinks=true,
  linkcolor=blue,
  citecolor=blue,
  urlcolor=blue,
  pdfborder={0 0 0}
}

\makeatletter

\newcommand{\Rmnum}[1]{\expandafter\@slowromancap\romannumeral #1@}
\makeatother

\usepackage[caption=false,font=footnotesize]{subfig}

\title{\LARGE \bf
Quasi-Static Fault-Tolerant Feedback Control of a Quadrotor under Rotor Failure with Provable Safety Guarantees
}
\author{Mohamed Al Lawati$^{1}$ and Adeel Akhtar$^{2}$\thanks{$^{1}$ Department of Mechanical and Industrial Engineering, Sultan Qaboos University, Muscat, Oman. {\tt\small mlawati@squ.edu.om}}
\thanks{$^{2}$ Department of Mechanical and Industrial Engineering, New Jersey Institute of Technology, NJ, USA. {\tt\small adeel.akhtar@njit.edu}}
}

\begin{document}

\maketitle
\thispagestyle{empty}
\pagestyle{empty}

%%% Abstract %%%
\begin{abstract}
This paper presents a nonlinear control law for a quadrotor unmanned aerial vehicle (UAV) under single-rotor failure that guarantees set stabilization via quasi-static feedback (QSF).
Given a geometric curve in three-dimensional space, we characterize and stabilize the zero-dynamics manifold, also known as the path-following manifold, which represents all feasible motions along the path. Stabilizing this manifold ensures path-invariance: a UAV with a failed rotor initialized on the path with an appropriate orientation remains on the path for all future time. Furthermore, local exponential convergence to the manifold is guaranteed under certain conditions, implying that, under the stated assumptions, rotor failure during flight does not cause transverse deviation from the path. The proposed controller thus provides theoretical safety guarantees, which are validated through numerical experiments in the Drake physics-based simulation engine. The Code is publicly available at
\url{https://gradslab.github.io/quasistatic-ftc/}.
\end{abstract}
%%% Introduction %%%
\section{Introduction}
Unmanned aerial vehicles (UAVs) are increasingly deployed in safety-critical missions such as inspection, surveillance, and environmental monitoring, where sustained operation in cluttered environments is essential~\cite{ZinMenYouBin2025}.
In these settings, the UAV is required to accurately follow a prescribed spatial path rather than merely stabilize to a point or track a time–parameterized trajectory, since the mission objectives are typically defined geometrically~\cite{Akh22,AkhSalSha21,akhtar2020underactuated}. A fundamental challenge, however, is ensuring ``safety'' of UAVs in the case of actuator faults such as a complete failure of one of its rotors. In this work, we quantify safety in terms of path-invariance, which means, in simple terms, that once the UAV is on the path, it stays on the path for all future time despite the rotor failure. 

In mobile robotics, safety is typically enforced by constraining motion within a tube around a desired path~\cite{ames2016control} using control barrier and Lyapunov functions.
Path invariance is stricter, requiring the robot to remain exactly on the path, making control design more challenging---especially under rotor failure. 
This work designs a feedback controller guaranteeing that a UAV with one failed rotor converges to and remains on the desired path.

Fault-tolerant flight control (FTFC) for UAVs has been widely studied, covering both partial 
actuator degradation and complete rotor failure~\cite{NanSunScar2022}. Complete rotor 
loss is more challenging and practically relevant, as damaged rotors are typically disabled to avoid vibration. Early results established that rotor failure eliminates yaw controllability 
while altitude and position remain controllable~\cite{AkhWasNie2013,freddi_feedback_2011}.
Linear approaches~\cite{lippiello_emergency_2014-1} rely on linearization around a relaxed hovering equilibrium~\cite{lippiello_emergency_2014-1,mueller2016relaxed}.
Nonlinear methods that avoid linearization include robust feedback linearization~\cite{lanzon_flight_2014}, nonlinear dynamic inversion (NDI)~\cite{sun2021autonomous}, backstepping~\cite{lippiello_emergency_2014}, and incremental nonlinear dynamic inversion (INDI)~\cite{sun_incremental_2020}.   
Fault detection and isolation (FDI) remains an active research area~\cite{LuiGusJav2022}. However, the design of an FDI scheme is not addressed here; instead, we assume the availability of a reliable FDI module and an instantaneous transition of the system from the four-rotor configuration to the three-rotor mode upon fault occurrence.

In this work, we propose a quasi-static transverse feedback linearization (QSTFL) for a UAV with one failed rotor. In contrast to fault tolerant dynamic state-feedback linearization~\cite{AkhWasNie2013}, quasi-static feedback (QSF) achieves exact input–output linearization without additional controller states~\cite{LawLyn23,Rudolph2021,GstKolSch2024}, and requires inverting a $2\times2$ rather than a $3\times3$ decoupling matrix. To our knowledge, this is the first 
QSTFL-based UAV controller guaranteeing path invariance and local exponential convergence to 
all path motions under rotor failure. 
We make the following contributions:
\begin{enumerate}
    \item A quasi-static controller with closed-form expressions rendering the path exponentially 
    stable and forward invariant for a UAV with one failed rotor (Theorem~\ref{thm:main_result}).
    \item A diffeomorphic transformation converting the system into a partially linear system, and a proof showing boundedness of internal states.
    \item Validation in the Drake physics-based simulation engine, with code publicly available.
\end{enumerate}

The remainder of this paper presents the system model in Section~\ref{sec:modl}, formulates the problem in Section~\ref{sec:prblm}, develops the controller in Section~\ref{sec:fdk}, establishes its stability in Section~\ref{sec:stability_analysis}, and reports simulation results in Section~\ref{sec:simulations}, followed by concluding remarks in Section~\ref{sec:conclusion}.

\paragraph*{Notation}
The set of reals is $\R$, and $x\in\R^n$ is written as $\col{x_1,\cdots,x_n}$. The time derivatives are denoted $\dot{x}$, $\ddot{x}$, and $x^{(i)}$ for $i\geq 3$. For a matrix $R$, $R_i$ denotes its $i^{\rm th}$ column and $R_{ij}$ its $(i,j)^{\rm th}$ entry. The transpose of $x$ is $x^\top$, and $x\vert_{\cS}$ denotes the restriction of $x$ to a set $\cS$ with neighborhood $\cN_{\cS}$. The Euclidean norm is $\|x\|$, the inner product is $\langle x,y\rangle = x^\top y$, and the cross product is $x\times y$. For $f:\R^n\to\R$, the gradient is $\nabla f(x) = \frac{\partial f}{\partial x}(x)\in\R^n$ and the Jacobian is ${\rm d}f$. The unit sphere is $\mathbb{S}^n = \{x\in\R^{n+1}:\|x\|=1\}$, and the natural basis of $\R^3$ is $\{e_1,e_2,e_3\}$. 
Trigonometric functions $\sin\xi$, $\cos\xi$, and $\tan\xi$ are abbreviated as $\s_\xi$, $\c_\xi$, and $\t_\xi$, respectively.

%%% Modeling %%%
\section{Modeling} \label{sec:modl}

A typical quadrotor~\cite{LeeLeoMcc10}, hereafter referred to interchangeably as a UAV, moves in a navigational frame $\mathscr{N} = \{n_1,n_2,n_3\}$ aligned with north, east, and down, with origin $O_{\mathscr{N}}$. A body-fixed frame $\mathscr{B} = \{b_1,b_2,b_3\}$ is attached to the UAV with origin $O_{\mathscr{B}}$ at the center of mass (CoM), where $b_1$ points toward the heading and $b_3$ points down at hover. The CoM position and velocity relative to $\mathscr{N}$ are $p\in\mathbb{R}^3$ and $v=\dot{p}\in\mathbb{R}^3$. The attitude of $\mathscr{B}$ relative to $\mathscr{N}$ is represented by $R\in\mathrm{SO}(3)$ with columns $b_1,b_2,b_3$, and the body-frame angular velocity is $\Omega\in\mathbb{R}^3$. The UAV configuration evolves on a 6-dimensional manifold, while the full state used below is 12-dimensional. Moreover, it is actuated by thrust $u_t\in(0,\infty)$ along $-b_3$ and torque $\tau_a\in\mathbb{R}^3$, where the $i^{\rm th}$ torque component $\tau_{ai}$ acts about $b_i$, $i=1,2,3$. The system is thus underactuated with two degrees of underactuation. The UAV mass and moment of inertia are $m$ and $J$.

The equations of motion of a standard UAV~\cite{LeeLeoMcc10,HamUsmAkh25} are
\begin{subequations}
	\label{eq:UAV_geo}
	\begin{align}
		\dot{p} &= v, &
		\dot{v} &= \g e_3 -  \frac{1}{m}u_t R e_3 , \label{eq:vdt}  \\
		\dot{R} &= RS({\Omega}), &
		\dot{\Omega} &= J^{-1}(\tau_a - \Omega \times J\Omega).\label{eq:OMdt}
	\end{align}
\end{subequations}
The translational~\eqref{eq:vdt} and rotational~\eqref{eq:OMdt} dynamics couple through $Re_3 = b_3$. Among the twelve possible Euler parameterizations~\cite{SchJun18}, we choose the 2-1-3 configuration for its simple $b_3$ expression (no yaw dependence) and singularity-free hover. The rotational dynamics unidirectionally influence the translational dynamics, and as shown in 
Section~\ref{sec:fdk}, a damping term in $\tau_a$ is required for rotational stability. Thus, the applied torque, $\tau_a$, takes the form
\begin{equation}\label{eq:taua}
    \tau_a = \tau - J k_\Omega \Omega,
\end{equation}
where $k_\Omega > 0$ is a damping coefficient.

Let $\theta$, $\phi$, and $\psi$ be in $\R$ and represent respectively, pitch, roll, and yaw of the UAV. We define the UAV's attitude by $\eta := \col{\theta, \phi, \psi}$. 
Thus, {using the torque transformation~\eqref{eq:taua}, system} \eqref{eq:UAV_geo} is parameterized as 
\begin{equation}
	\label{eq:UAV_Euler}
	\begin{aligned}
		\dot{p} &= v, &
		\dot{v} &= \g e_3 - \frac{1}{m}u_t R_3, \\
		\dot{\eta} &= W\Omega, &
		\dot{\Omega} &=   J^{-1}(\tau - \Omega \times J\Omega) {-k_\Omega \Omega},
	\end{aligned}
\end{equation}
where
\begin{equation*}
	 R_3 = b_3 = \left[\begin{array}{ccc}\s_\theta\c_\phi\\ -\s_\phi\\ \c_\theta\c_\phi \end{array}\right],\text{and} \;\;
	W = \left[\begin{array}{ccc}
		\s_\psi / \c_\phi & \c_\psi / \c_\phi & 0 
		\\
		\c_\psi & -\s_\psi & 0 
		\\
		\s_\psi \t_\phi & \c_\psi \t_\phi & 1 
	\end{array}\right].
\end{equation*}
Next, we define the state and input vectors, respectively, as $x \eqdef \col{p,v,\eta,\Omega} \in  \R^{12}$ and ${U} \eqdef \col{u_t, \tau} \in (0,\infty) \times \R^3$. Hence, \eqref{eq:UAV_Euler} can be written in a control-affine form as
\begin{equation}
	\label{eq:UAV}
	\dot{x} = f(x) + {\cG}(x){U},
\end{equation}
where the vector fields $f:\R^{12} \to \R^{12}$ and $\cG:\R^{12} \to \R^{12}\times \R^{4}$ are given by \small
\begin{equation*}
	 f(x) = \!
	 \begin{bmatrix}
	 	v\\\g e_3 \\ W\Omega\\ -J^{-1}(\Omega \times J\Omega) - k_\Omega \Omega
	 \end{bmatrix}\!, 
	  \cG(x) = \!
	  \begin{bmatrix}
	  	0_{3\times 4}\\
	  	\begin{matrix}
	  		-Re_3/m & 0_{3\times 3}
	  	\end{matrix}\\
	  	0_{3\times 4}\\
	  	\begin{matrix}
	  		0_{3\times 1} & J^{-1}
	  	\end{matrix}
	  \end{bmatrix} \! .
\end{equation*}
\normalsize
The degree of underactuation of~\eqref{eq:UAV} is $2$. Also, the model~\eqref{eq:UAV} is not well-defined at the singularity point $\phi = \pm \pi/2$.

\subsection{UAV Rotor Failure Model}
Under single-rotor failure, it can be shown~\cite{HamUsmAkh25} that $\tau_1 = c_1 u_t - c_2\tau_3$, where $c_1=\ell/2$, $c_2=\ell/(2c_d)$, $\ell$ is the rotor-to-CoM distance, and $c_d$ is the propeller drag-to-thrust ratio. Since $\tau_1$ is no longer independent, the effective input reduces to $u=\mathrm{col}(u_t,\tau_2,\tau_3)$, and system~\eqref{eq:UAV} becomes
\begin{equation}\label{eq:UAV_fault}
    \dot{x} = f(x) + G(x)u,
\end{equation}
where $f$ is as in~\eqref{eq:UAV} and
\begin{equation*}
    G(x) = 
        \left[
        \begin{array}{cc}
        \multicolumn{2}{c}{0_{3\times 3}} \\[1pt]
        \hdashline\\[-10pt]
        -\frac{1}{m}Re_3 & 0_{3\times 2} \\[1pt]
        \hdashline\\[-10pt]
        \multicolumn{2}{c}{0_{3\times 3}} \\[1pt]
        \hdashline\\[-10pt]
        \multicolumn{2}{c}{J^{-1}C}
        \end{array}
        \right]\!\!, \text{where }  C =  \begin{bmatrix}
            c_1 & 0 & -c_2 \\
            0   & 1 & 0    \\
            0   & 0 & 1
        \end{bmatrix}.
\end{equation*}

%%% Problem formulation %%%
\section{Problem formulation}\label{sec:prblm}

The objective is to drive the UAV CoM to a geometric path $\mathcal{C} \subset \mathbb{R}^3$, where geometric means $\mathcal{C}$ is a curve in space not parameterized by time, while enforcing invariance of motions along $\mathcal{C}$ despite complete single-rotor failure. Invariance guarantees that a UAV on the path remains on it for all future time regardless of rotor failure. Such a property cannot be guaranteed by conventional trajectory-tracking controllers. A curve $\mathcal{C}\subset\mathbb{R}^3$ is represented as the intersection of two independent surfaces:
\begin{equation}\label{eq:C}
\mathcal{C} = \{p\in\mathbb{R}^3: h_1(p)=h_2(p)=0,\; \d h_1(p)\times \d h_2(p)\neq 0\}.
\end{equation}
As in~\cite{AkhWasNie2013}, $\mathcal{C}$ is not required to be closed, thereby relaxing the closedness assumption in~\cite{RozMag2012}. The three-dimensional input $u$ provides sufficient control authority to enforce the constraints $h_1(p)=h_2(p)=0$. We call them transverse constraints. The additional degree of actuation of $u$ is used to define motion along $\cC$. Thus, we define a third output $h_3(p)$ to be a path-coordinate function, e.g., the arc length along $\mathcal{C}$ from a reference point. Unlike $h_1$ and $h_2$, the output $h_3$ is not necessarily regulated to zero; instead, it is used to prescribe the desired motion along $\mathcal{C}$. The output map is
\begin{equation}\label{eq:output}
h(p)=\mathrm{col}\bigl(h_1(p),h_2(p),h_3(p)\bigr).
\end{equation}

\begin{assumption}\label{asp:pp}
    Given a UAV and a path $\cC$, there exists a neighborhood $\cN_{\cC}$ of    $\cC$, such that the following conditions are satisfied on $\cN_{\cC}$
    \begin{enumerate}[label=\textbf{A\arabic*} ]
        \item \label{asp:ut}$u_t \neq 0$ and
        \item \label{asp:dt} $\dotp{R_3}{\d h_2} \neq 0$.
    \end{enumerate}
\end{assumption}

Assumption~\ref{asp:ut} requires nonzero thrust, as $u_t = 0$ implies 
all motors off. For Assumption~\ref{asp:dt}, $\langle \nabla h_2, R_3 
\rangle = 0$ on $\mathcal{C}$ would mean the thrust vector direction,
$-R_3$, lies in the tangent plane of the surface $h_2 = 0$, 
which is physically impossible for horizontal paths, and 
requires $u_t < 0$ for non-horizontal ones. Hence, $\langle 
\nabla h_2, R_3 \rangle \neq 0$ on $\mathcal{C}$ is a mild 
and practically feasible assumption. Henceforth, we assume our states evolve on the regular domain $\cX :=
    \bigl\{x\in\mathcal \R^{12}:\;
    u_t(x)\neq 0,\;
    \dotp{\d h_2}{R_3}\neq 0,\;
    \phi\neq \pm \pi/2
    \bigr\}$.

\begin{problem}[Fault-tolerant Path Following Problem (Ft-PFP)]
\label{prob:Quasi-PF-Problem} 
Given {a} UAV {with a completely failed rotor}~\eqref{eq:UAV_fault} and a path~\eqref{eq:C} satisfying Assumption~\ref{asp:pp}, design feedback controller $\kappa: \R^{12} \to \R^3$, $x\mapsto \kappa(x) = u$, such that the closed-loop UAV system achieves the following goals:
\begin{enumerate}[label=\textbf{G\arabic*} ]
	\item \label{goal:1} The UAV CoM approaches $\cC$ exponentially.
	\item \label{goal:2} The UAV CoM's motion along $\cC$ is controlled-invariant, i.e., if the UAV CoM is initialized on $\cC$ with its velocity vector pointing along the tangent vector to $\cC$, then $\kappa(x)$ ensures the UAV CoM never leaves $\cC$.
    \item \label{goal:3} The UAV CoM follows a prescribed tangential motion along $\mathcal C$.
    \item \label{goal:4} The UAV yaw motion remains bounded.
	 \end{enumerate}
\end{problem}
\begin{remark}\label{rmk:dea}
The authors in~\cite{AkhWasNie2013} and~\cite{HamUsmAkh25} solved a similar Ft-PFP using dynamic extension, introducing two additional controller states (typically thrust and its time-derivative) and a new input (typically thrust's second time-derivative) to feedback-linearize the extended dynamics. Their approach has two limitations. First, recovering the thrust requires either solving a differential equation involving the thrust or double-integrating a new input defined in terms of the thrust and its first time derivative. Hence, sensor measurement is required for thrust and its time-derivative, posing practical challenges. Second, gain tuning involves two extra states. The proposed approach overcomes both limitations.
\end{remark}

%%% Solution using a QSF %%%
\section{Solution using a QSF}\label{sec:fdk}

Our approach separates the output coordinates into transverse and tangential components. The transverse outputs $h_1$ and $h_2$ define the desired path $\mathcal{C}$ and are used to enforce convergence to and invariance of the path. The third output $h_3=s(p)$ parameterizes motion along $\mathcal{C}$ and is not, in general, regulated to zero. Instead, it is used to assign the desired tangential motion along the path, such as point stabilization, velocity tracking, or acceleration tracking. Thus, the QSF design stabilizes the transverse dynamics while leaving the tangential dynamics assignable through a user-specified longitudinal command. 

Consider the system~\eqref{eq:UAV_fault} and the output~\eqref{eq:output}. We differentiate each output component until an input appears. This yields $\col{h_1^{(r_1)}, h_2^{(r_2)}, h_3^{(r_3)}} = Au + \col{L_f^{r_1}h_1, L_f^{r_2}h_2, L_f^{r_3}h_3}$,
where $A \in \R^{3\times 3}$ is the decoupling matrix whose $(i,j)^{\text{th}}$ entry is $L_{g_j}L_f^{r_i-1}h_i$, and $r_i$ is the lowest derivative order of output $h_i$ such that at least one input shows up \cite{Isi95}. Invertibility of the decoupling matrix $A$ determines the feasibility of input-output linearization. For our system, we have $r_1=r_2=r_3=2$. Since the failed-rotor system has the input vector $u=\mathrm{col}(u_t,\tau_2,\tau_3)$, the $i^{\text{th}}$ row of $A$ is $-\frac{1}{m}\left[\langle R_3,\d h_i\rangle,\;0,\;0\right], i=1,2,3$. Notice that $A$ is singular of constant rank $1$ whenever at least one of the scalars $\langle R_3,\d h_i\rangle$ is nonzero. On $\cX$, we have $\langle R_3,\d h_2\rangle\neq 0$, so the row corresponding to $h_2$ provides one independent input-output relation. This means that we have a single  independent relation between the inputs and  output derivatives. Hence, we assign the auxiliary input $\nu_2$ as
\begin{equation} \label{eq:nu2}
        \nu_2 := \ddot{h}_2 =-\frac{1}{m} \dotp{R_3}{\d h_2}u_t + \beta_2, 
\end{equation}
where $\beta_2 = v^\top H_2v + \dotp{\d h_2}{\g e_3}$, and $H_2$ is the Hessian of $h_2$ with respect to $p$. We need to design $\nu_2$ to send $h_2$ to zero. To do so, let $Z^2:\mathcal X\to\mathbb R^2$ be the map defined by $Z^2(x):=
\col{
h_2(p),\,
\dot h_2(x)}.$ Define the transformed state $z^2$ as $z^2 := Z^2(x)$. Hence, using \eqref{eq:nu2}, the $z^2$-dynamics is 
\begin{equation}\label{eq:z2}
    \dot{z}^2 = A_{c2}z^2 + B_{c2}\nu_2,
\end{equation}
where $(A_{c2}$, $B_{c2})$ is a 2-by-2 Brunovsky pair~\cite{MarTom98}.
We choose
\begin{equation}\label{eq:nui2}
    \nu_2 = -K_2 z^2,
\end{equation}
where $K_2 \in \R^{1\times 2}$ such that $A_{c2} - B_{c2}K_2$ is Hurwitz. This choice stabilizes the origin of the $z^2$-dynamics. Solving~\eqref{eq:nu2} for $u_t$, one obtains an algebraic expression for the thrust as
\begin{equation}\label{eq:ut}
    u_t = m\frac{\beta_2-\nu_2}{\dotp{\d h_2}{R_3}},
\end{equation}
where $\nu_2$ is given in~\eqref{eq:nui2}, and must be designed such that $\nu_2 \neq \beta_2$.

\begin{remark}
    In this design, we compute the thrust algebraically by performing elementary algebraic operations. This is unlike the dynamic controller of~\cite{AkhWasNie2013}. See Remark~\ref{rmk:dea} for the benefits that a QSF offers.
\end{remark}

To acquire the required two equations for the remaining inputs $\tau_2$ and $\tau_3$, we take extra time derivatives of $\ddot{h}_1$ and $\ddot{h}_3$. It turns out that $h_1^{(4)}$ and $h_3^{(4)}$ provide two additional independent equations in terms of $\tau_2$ and $\tau_3$. In particular,  we have
\begin{equation}   \label{eq:nu13} 
       \begin{bmatrix} h_1^{(4)} \\ h_3^{(4)}\end{bmatrix} =  
        \tilde{A}\begin{bmatrix}
                \tau_2 \\ \tau_3
            \end{bmatrix}  +\tilde{\beta}(x),
    \end{equation}
where $\tilde{\beta}_i = L_f^4 h_i(x)$ whose explicit expressions are omitted for brevity, and
\begin{equation}\label{eq:Atilde}
    \begin{aligned}
        \tilde{A} =
        \frac{u_t}{m\dotp{\d h_2}{R_3}}\begin{bmatrix}
            (\d h_1 \times \d h_2)^\top \\
            (\d h_3 \times \d h_2)^\top
        \end{bmatrix}
        \begin{bmatrix}
            \frac{ R_2}{J_{2}} & \frac{-c_2R_1}{J_{1}}
        \end{bmatrix}.
    \end{aligned}
\end{equation}
\begin{remark}
    This decoupling matrix $\tilde{A}$ is $2$-by-$2$ as compared to the 3-by-3 decoupling matrix of~\cite{AkhWasNie2013}. This dimension simplification is a result of pre-computing the thrust in~\eqref{eq:ut}.
\end{remark}
We define the auxiliary inputs
\begin{equation}\label{eq:nu13_def}
    \begin{bmatrix}\nu_1 \\ \nu_3\end{bmatrix}:=\begin{bmatrix} h_1^{(4)} \\ h_3^{(4)}\end{bmatrix} =  
        \tilde{A}\begin{bmatrix}
                \tau_2 \\ \tau_3
            \end{bmatrix}  +\tilde{\beta}(x),
\end{equation}
We need to design $\nu_1$ to send $h_1$ to zero. The design of $\nu_3$ should allow for a desired motion along the path. To do so,
let $Z^1:\mathcal X\to\mathbb R^4$ denote the transverse coordinate map defined as $Z^1(x):=
\col{
h_1(p),\,
\dot h_1(x),\,
\ddot h_1(x),\,
h_1^{(3)}(x)
}$.
Using the map $Z^1$, we define transformed state $z^1$ as
$
    z^1 := Z^1(x) = 
    \col{
        h_1(x),\,
        \dot{h}_1(x),\,
        \ddot{h}_1(x),\,
        h_1^{(3)}(x)
        }.
$
Then
\begin{equation}\label{eq:z1}
    \dot{z}^1 = A_{c4}z^1 + B_{c4}\nu_1,
\end{equation}
where $(A_{c4},B_{c4})$ is the $4$-by-$4$ Brunovsky pair. The input
\begin{equation}\label{eq:nu1}
    \nu_1 = -K_1 z^1
\end{equation}
is chosen such that $A_{c4}-B_{c4}K_1$ is Hurwitz; hence it stabilizes the origin of the $z^1$-dynamics. 
Together with $z^2$ in~\eqref{eq:z2}, define the transverse state
$
    z^\perp := \mathrm{col}(z^1,z^2) = \col{Z^1(x), Z^2(x)} \in \R^6.
$
Equations~\eqref{eq:z2} and~\eqref{eq:z1} define the transversal dynamics driven by the auxiliary inputs $\nu_2$ and $\nu_1$, respectively. We underscore that, uppercase letters such as $Z^i$ denote coordinate maps on $\mathcal X$, while lowercase letters denote their values along trajectories; that is, $z^i=Z^i(x)$.

For the tangential output $h_3=s(p)$, define
$
    z^\parallel := z^3 :=
    \mathrm{col}\bigl(
        h_3(x),\,
        \dot{h}_3(x),\,
        \ddot{h}_3(x),\,
        h_3^{(3)}(x)
    \bigr) \in \R^4.
$
The corresponding tangential dynamics are
\begin{equation}\label{eq:zparallel}
    \dot{z}^\parallel = A_{c4}z^\parallel + B_{c4}\nu_3.
\end{equation}

\begin{remark}\label{rmk:tangential_control}
Unlike $z^\perp$, the state $z^\parallel$ is not necessarily stabilized to the origin of the $z^{\parallel}$-dynamics. Instead, $\nu_3$ is used to assign the desired behavior of $z^{\parallel}$ along the path $\mathcal{C}$. For example, to track a desired path speed $v_d(t)$, define $ e_v:=\mathrm{col}\bigl(\dot h_3-v_d,\ddot h_3-\dot v_d,h_3^{(3)}-\ddot v_d\bigr)$, and choose $\nu_3=v_d^{(3)}-K_3e_v$ with a stabilizing $K_3$. Point stabilization or path-coordinate trajectory tracking are obtained similarly by defining the corresponding error in $h_3$ and its derivatives.
\end{remark}

The torque inputs $\tau_2, \tau_3$ can easily be computed using~\eqref{eq:nu13_def}. Note that
$        \det( \tilde{A} ) =
       -c_2 \frac{   \dotp{\d h_1 \times \d h_2}{\d h_3}   u_t^2}{m^2J_{1}J_{2}\dotp{\d h_2}{R_3}}$. 
On $\cX$, we have  $u_t\neq 0$ and $\dotp{\d h_2}{R_3}\neq 0$. Also, the surfaces $h_1$ and $h_2$ are independent by definition of $\mathcal{C}$, implying $\d h_1\times \d h_2\neq 0$ in a neighborhood of $\mathcal{C}$. Since $h_3=s(p)$ is the path-coordinate function, its gradient has a nonzero projection along the tangent direction of $\mathcal{C}$. Equivalently, $\dotp{\d h_1\times \d h_2}{\d h_3}\neq 0 $ on $\cX$. Therefore, $\det(\tilde{A})\neq 0$ whenever the closed-loop trajectory remains in this regular domain. The parameters $m$, $J_{1}$, $J_{2}$, and $c_2$ are positive scalars. 
The torques $\tau_2$, $\tau_3$ are therefore given by
\begin{equation}\label{eq:tau}
 \begin{bmatrix}\tau_2\\ \tau_3\end{bmatrix} = \tilde{A}^{-1}\left(
 \begin{bmatrix}
        \nu_1\\ \nu_3 
    \end{bmatrix}
    - \begin{bmatrix}
        \tilde{\beta}_1(x)\\
        \tilde{\beta}_3(x)
        \end{bmatrix}
        \right),
\end{equation}
where $\nu_1$ and $\nu_2$ are chosen by~\eqref{eq:nu1} and~\eqref{eq:nui2} while $\nu_3$ assigns the tangential motion as described in Remark~\ref{rmk:tangential_control}. As a result, the complete QSF is given by~\eqref{eq:ut} and \eqref{eq:tau}.
%%% Stability analysis %%%
\section{Stability analysis}
\label{sec:stability_analysis}

Recall the transverse coordinate maps $Z^1:\mathcal X\to\R^4$ and $Z^2:\mathcal X\to\R^2$ defined in Section~\ref{sec:fdk}. Let $
Z^\perp(x):=\mathrm{col}\bigl(Z^1(x),Z^2(x)\bigr).
$
The path-following manifold is
\begin{equation}
\label{eq:x-star}
    \mathcal X^\ast
    :=
    \{x\in\mathcal X:Z^\perp(x)=0\}.
\end{equation}
Equivalently, $\mathcal{X}^\ast=\{x\in\mathcal{X}:z^1=0,\ z^2=0\}$. 
The set $\mathcal{X}^\ast$ represents motions for which the UAV CoM lies on $\mathcal{C}$ with the required transverse derivative conditions. The tangential state $z^\parallel$ is not required to vanish on $\mathcal{X}^\ast$; instead, it determines the motion along $\mathcal{C}$.

\begin{figure*}[!htb]
    \centering
        \subfloat[Stage 1:  $t=\SI{0}{\second}$\label{fig:sub1}]{
        \includegraphics[width=0.23\textwidth]{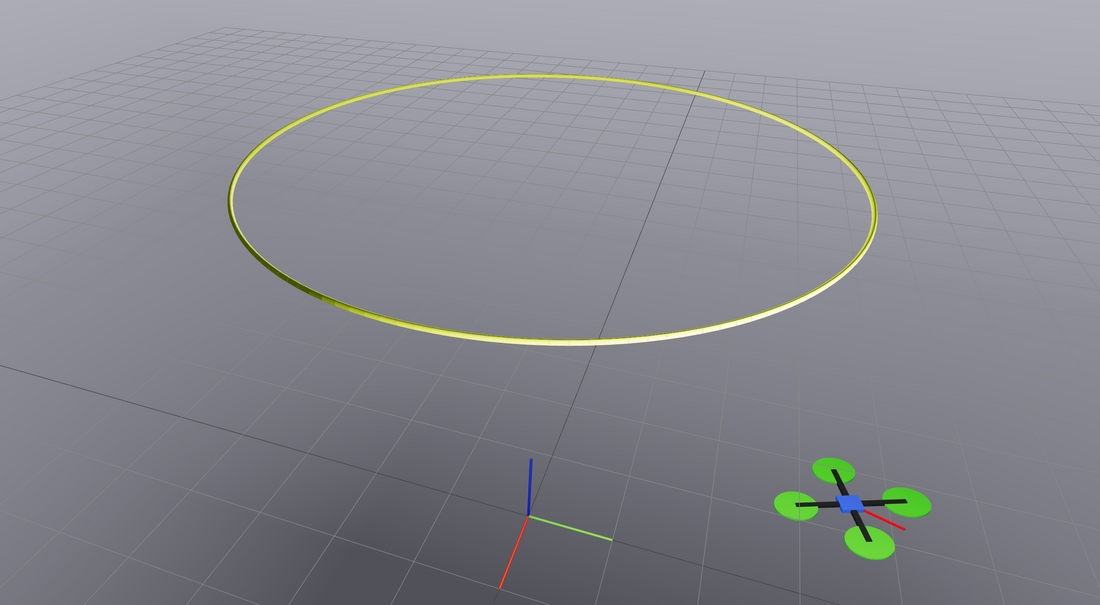}
    }
    \subfloat[Stage 1:  $t=\SI{2}{\second}$\label{fig:sub2}]{
        \includegraphics[width=0.23\textwidth]{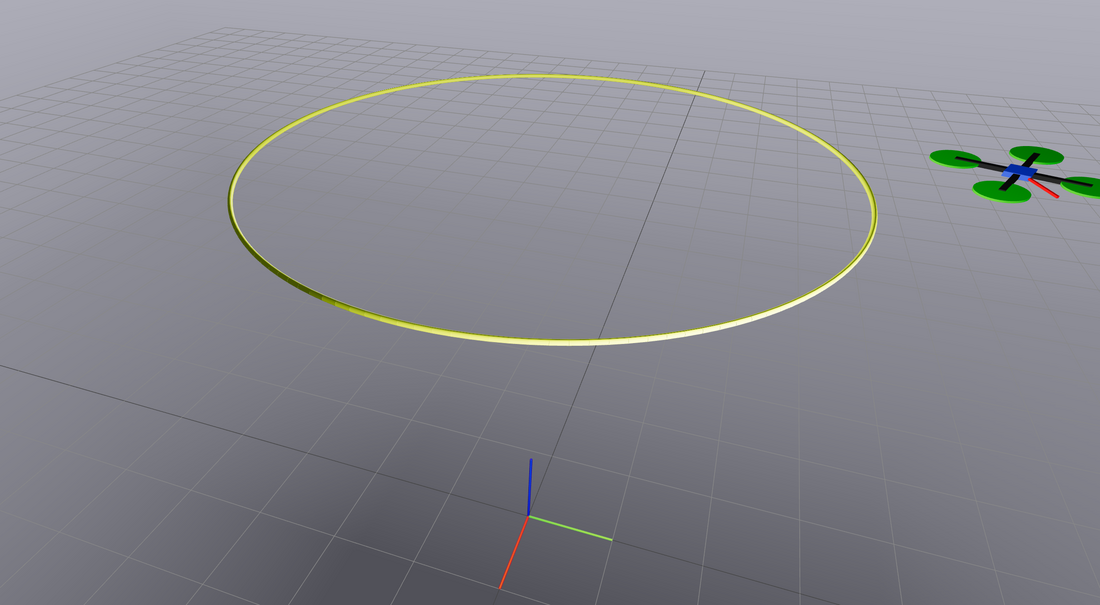}
    }
    \subfloat[Stage 1:  $t=\SI{4}{\second}$\label{fig:sub3}]{
        \includegraphics[width=0.23\textwidth]{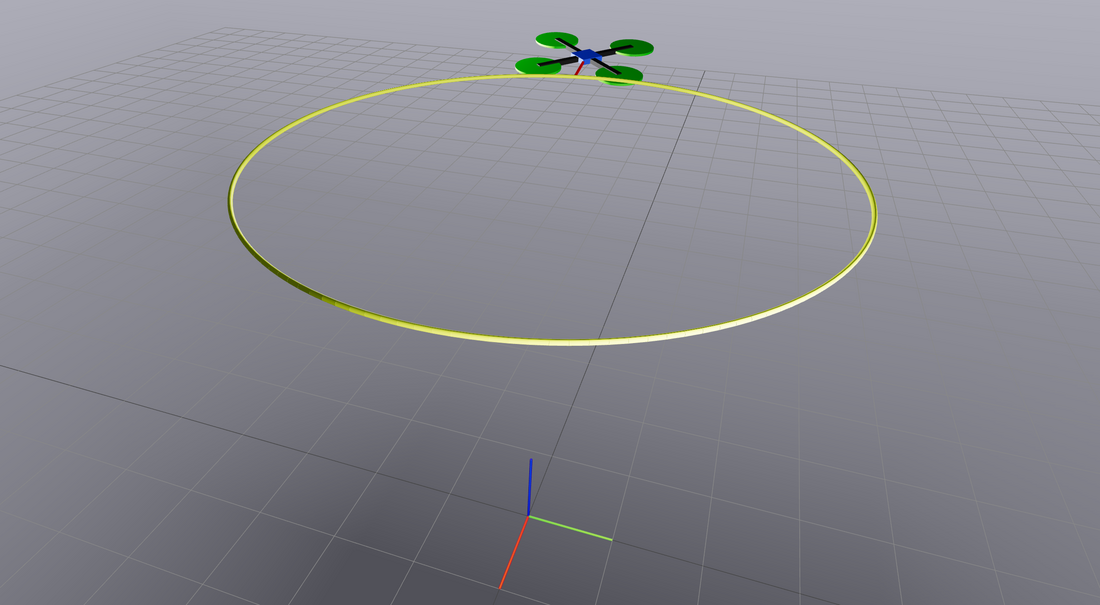}
    }
    \subfloat[Stage 1:  $t=\SI{5}{\second}$\label{fig:sub4}]{
        \includegraphics[width=0.23\textwidth]{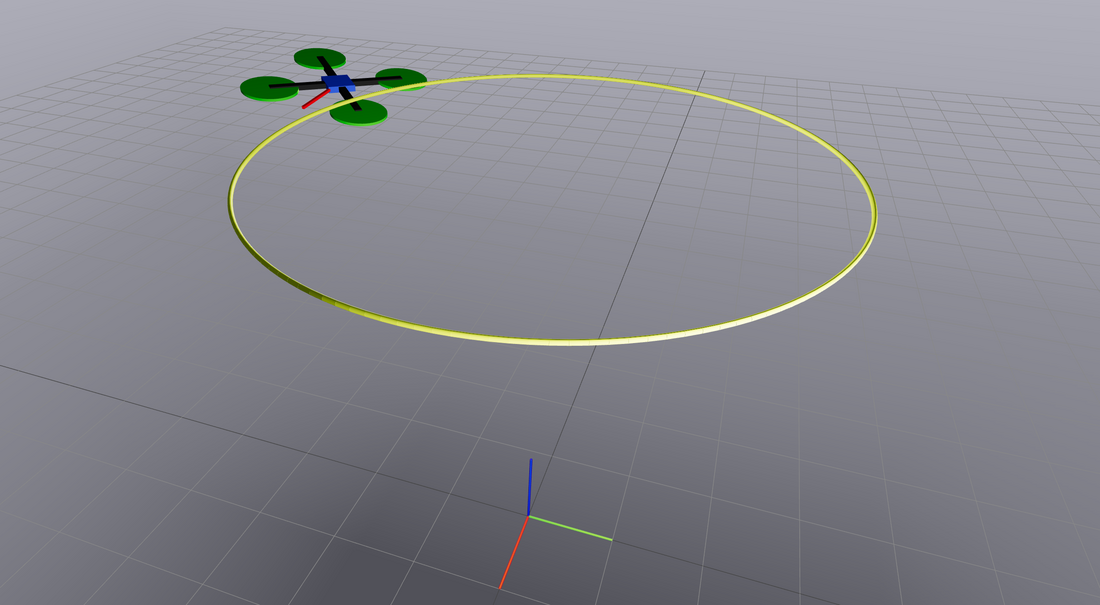}
    }\\[1ex] \vspace{-10pt}
        \subfloat[Stage 2:   $t=\SI{10}{\second}$\label{fig:sub5}]{
        \includegraphics[width=0.23\textwidth]{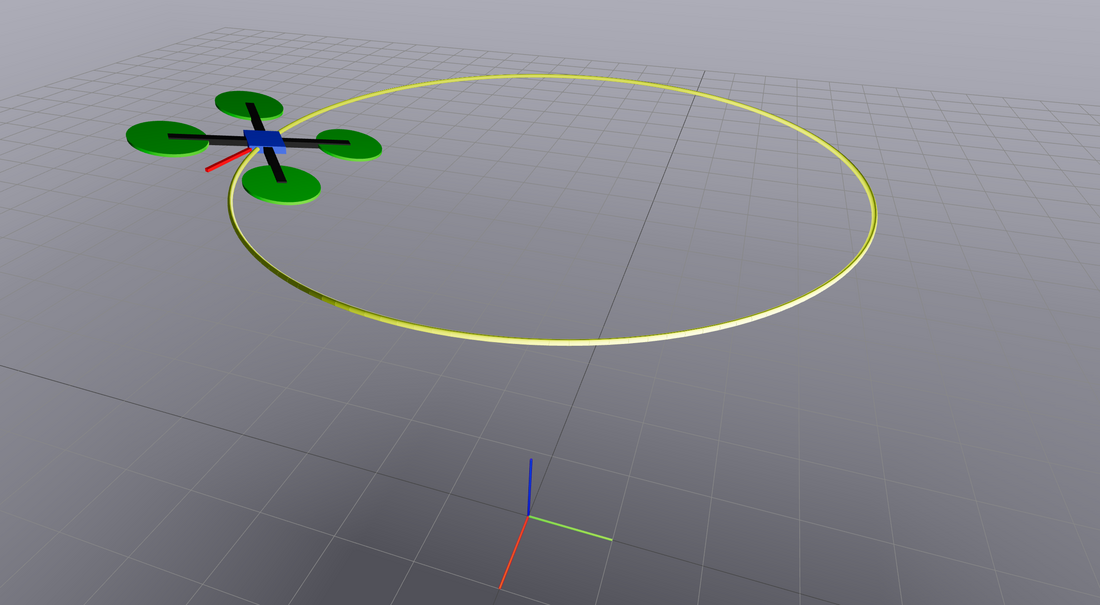}
    }
    \subfloat[Stage 2:  $t=\SI{11}{\second}$\label{fig:sub6}]{
        \includegraphics[width=0.23\textwidth]{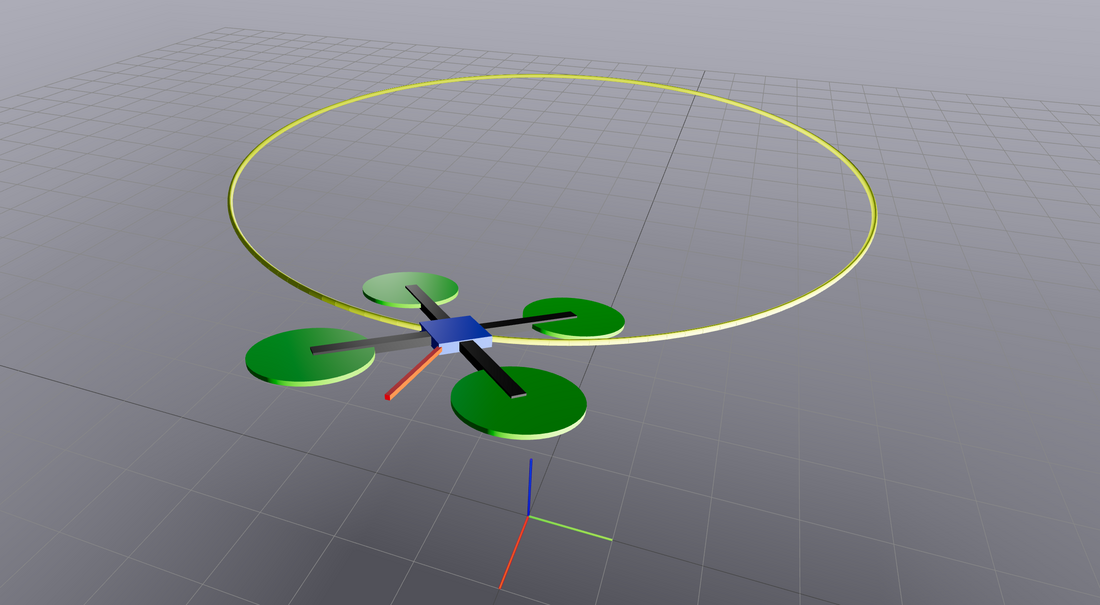}
    }
    \subfloat[Stage 2: $t=\SI{12}{\second}$\label{fig:sub7}]{
        \includegraphics[width=0.23\textwidth]{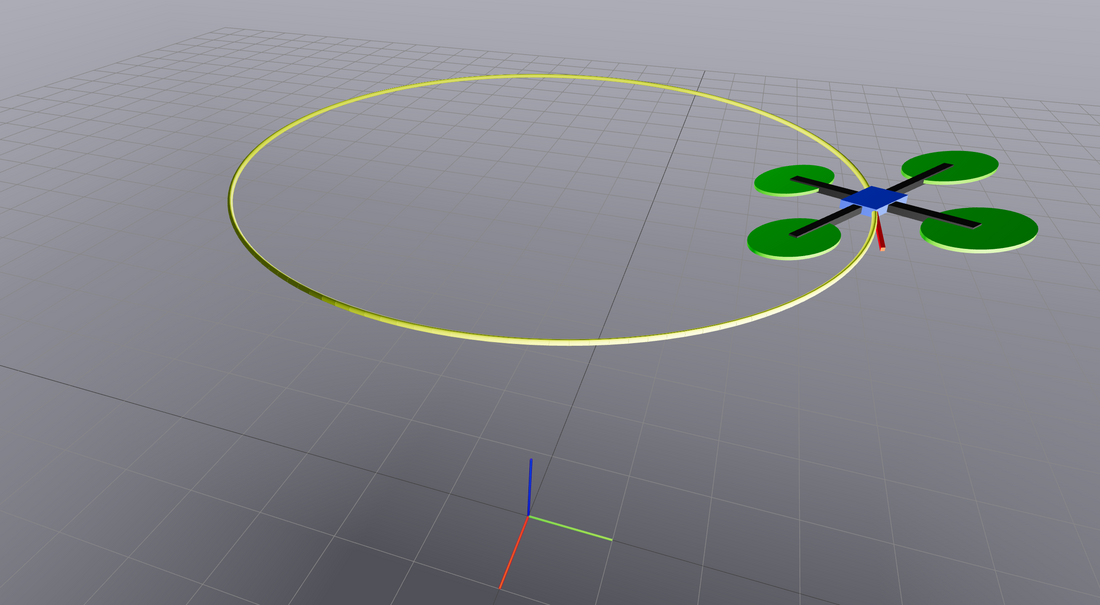}
    }
    \subfloat[Stage 2:  $t=\SI{13}{\second}$\label{fig:sub8}]{
        \includegraphics[width=0.23\textwidth]{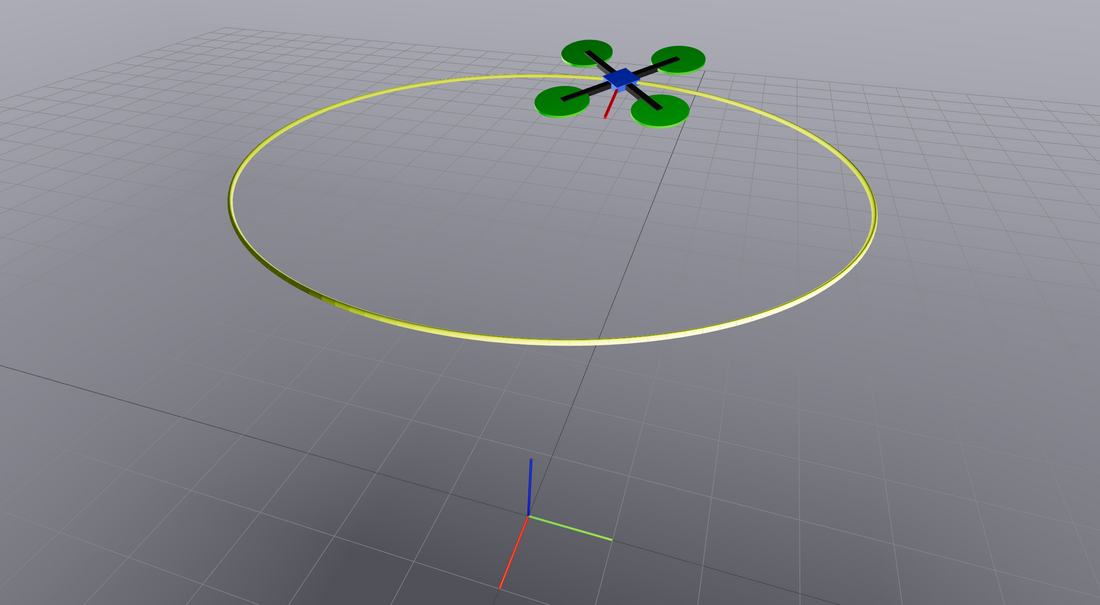}
    }\\[1ex] \vspace{-10pt}
        \subfloat[Stage 3:  $t=\SI{20}{\second}$\label{fig:sub9}]{
        \includegraphics[width=0.23\textwidth]{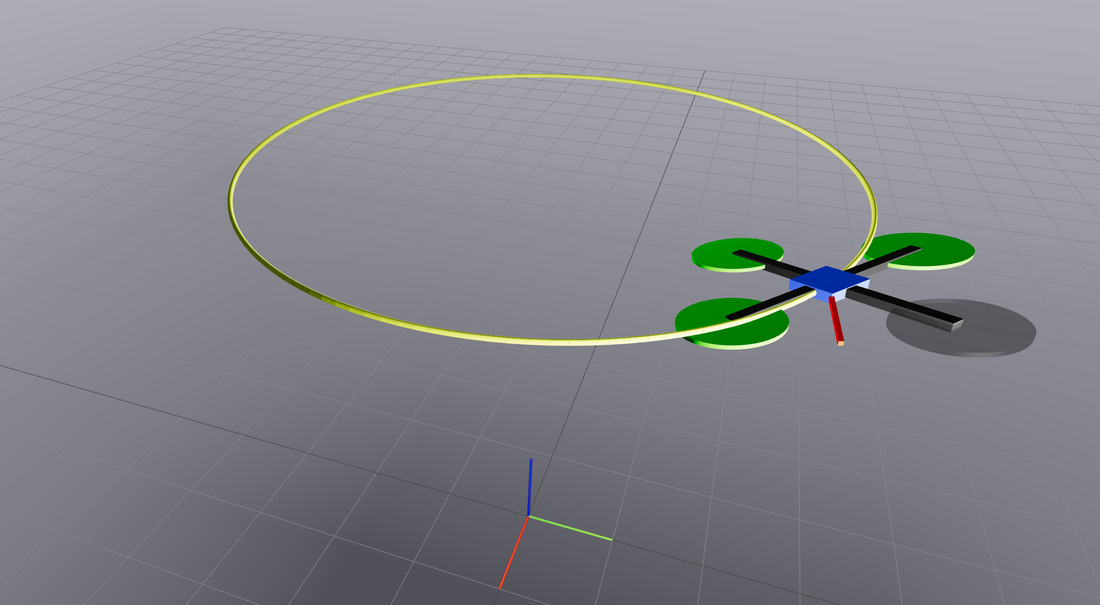}
    }
    \subfloat[Stage 3:  $t=\SI{21}{\second}$\label{fig:sub10}]{
        \includegraphics[width=0.23\textwidth]{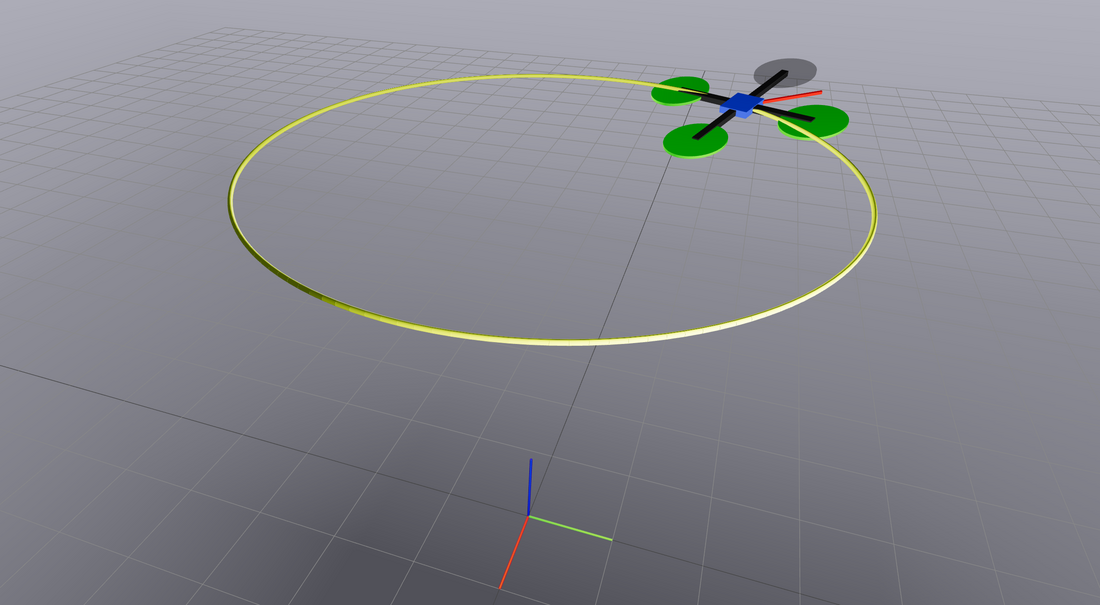}
    }
    \subfloat[Stage 3: $t=\SI{22}{\second}$\label{fig:sub11}]{
        \includegraphics[width=0.23\textwidth]{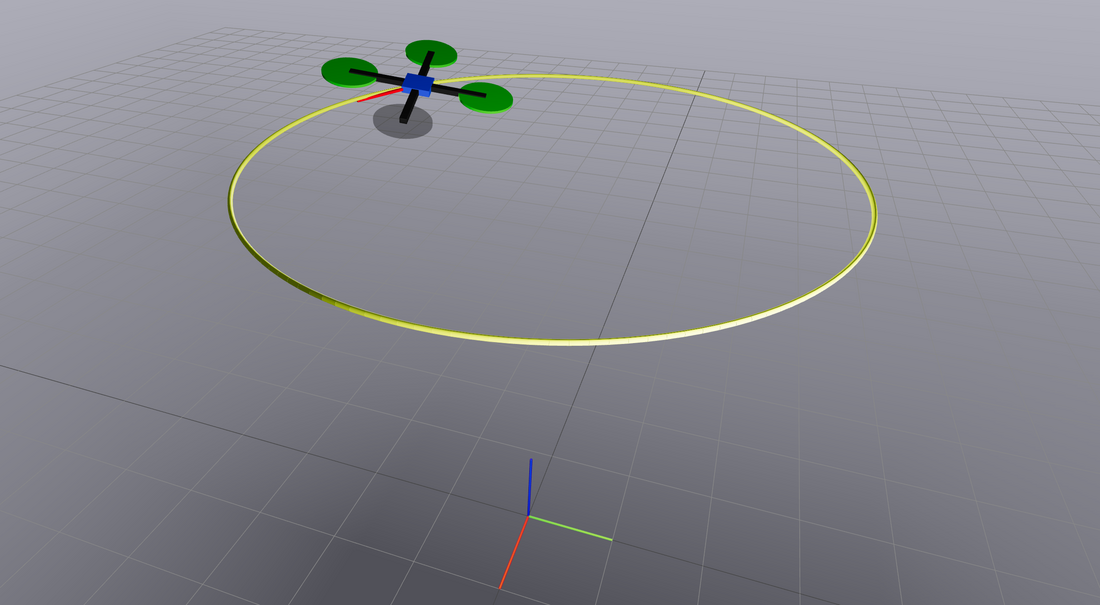}
    }
    \subfloat[Stage 3:  $t=\SI{23}{\second}$\label{fig:sub12}]{
        \includegraphics[width=0.23\textwidth]{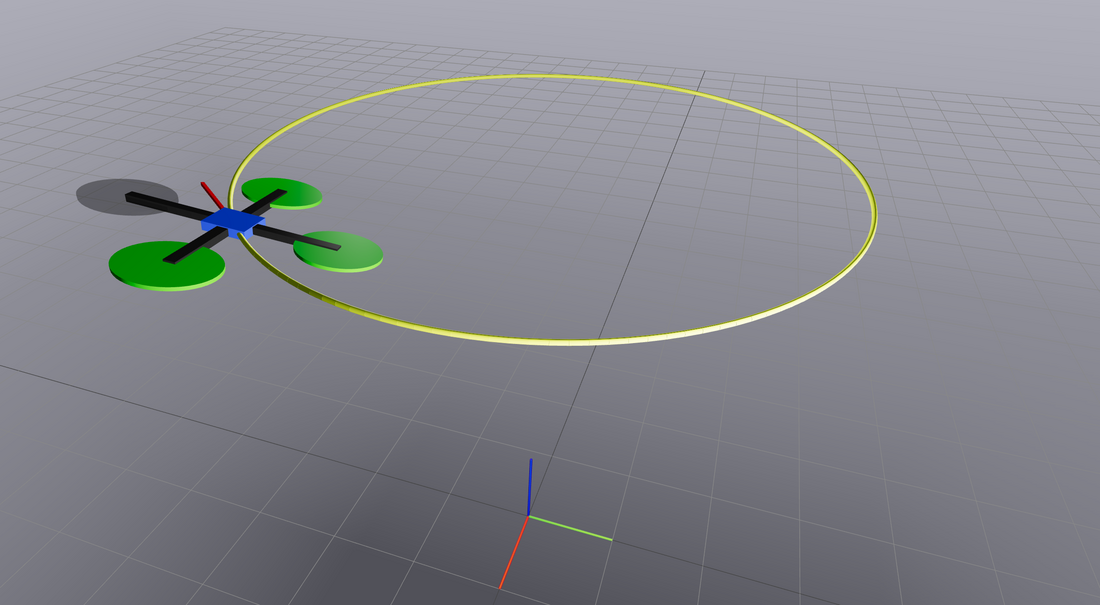}
    }

    \caption{Snapshots of the numerical simulation performed in Drake highlighting motion invariance despite sudden rotor failure.}
    \label{fig:snapshots}
\end{figure*}
On $\cX$, define a mapping $    \sigma:\mathcal{X}\to \R^{12}, x\mapsto (z^\perp,z^\parallel,\xi)$, where $\xi\in \mathbb{S}^1\times\R$ is chosen so that $\sigma$ defines a local coordinate transformation. It suffices to choose $\xi$ such that the Jacobian of $\sigma$, denoted by ${\rm d}\sigma$, is nonsingular. We choose
\begin{equation}\label{eq:xi}
    \xi =
    \mathrm{col}\bigl(
        \psi,\,
        J_1\Omega_1+J_3c_2\Omega_3
    \bigr)
    \in \mathbb{S}^1\times\R .
\end{equation}
The determinant of the Jacobian of $\sigma$ is
\begin{equation*}
    \det({\rm d} \sigma) = -c_2J_3\frac{(\beta_2 - \nu_2)^4}{\dotp{R_3}{\d h_2}^6} \dotp{\d h_1 \times \d h_2}{\d h_3}^4 \cos(\phi).
\end{equation*}
On $\cX$, $\det({\rm d} \sigma)$ is well defined implying the Jacobian ${\rm d}\sigma$ is nonsingular. Therefore, $\sigma$ is a local diffeomorphism. In the transformed coordinates, the internal dynamics are given by
\begin{align}\label{eq:intr_dyn}
    \dot{\xi} = 
    \left.\begin{bmatrix}
         e_3^\top W(\eta)\Omega\\
        -[1, 0, c_2] (\Omega \times J\Omega  +  k_\Omega J\Omega ) + c_1 u_t 
    \end{bmatrix}\right|_{x=\sigma^{-1}(z^\perp,z^\parallel,\xi)}.
\end{align}
Note that although $\sigma^{-1}$ might be difficult to compute, it is guaranteed to exist since $\sigma$ is a diffeomorphism.

The coordinate transformation $\sigma$ maps $x$ to $(z^\perp,z^\parallel,\xi)$. The QSF controller~\eqref{eq:ut},~\eqref{eq:tau}, together with the auxiliary inputs $\nu_1=-K_1z^1$ and $\nu_2=-K_2z^2$, stabilizes the transverse state $z^\perp$. The tangential state $z^\parallel$ evolves according to the selected input $\nu_3$ and is used to prescribe the motion along the path. Therefore, the stability analysis below concerns the path-following manifold $\mathcal{X}^\ast=\{x:Z^\perp(x)=0\}$ and the boundedness of the internal state $\xi$. 

The next lemma proves the boundedness of the angular velocity states. This fact will be used in our main stability result.
\begin{lemma} \label{lma:OM_bound}
    Let 
    \begin{equation} \label{eq:rot_dyn}
        \dot{\Omega} = J^{-1}(\tau - \Omega \times J\Omega) - k_\Omega \Omega,
    \end{equation}
    where $\Omega \in \R^3$ is the state and $\tau \in \R^3$ is the input. Then, the dynamics~\eqref{eq:rot_dyn} is input-to-state stable.
\end{lemma}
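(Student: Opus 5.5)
We will prove the claim with a single quadratic ISS-Lyapunov function, namely the rotational kinetic energy
\[
V(\Omega) = \tfrac12 \Omega^\top J \Omega .
\]
Since $J$ is symmetric positive definite, $V$ satisfies the sandwich bounds
\[
\tfrac12 \lambda_{\min}(J)\|\Omega\|^2 \le V(\Omega) \le \tfrac12 \lambda_{\max}(J)\|\Omega\|^2 ,
\]
so it is a valid class-$\mathcal K_\infty$ bounded candidate.

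Next we differentiate $V$ along~\eqref{eq:rot_dyn}. This gives
\[
\dot V = \Omega^\top J\dot\Omega = \Omega^\top\tau - \Omega^\top(\Omega\times J\Omega) - k_\Omega\,\Omega^\top J\Omega .
\]
The key structural fact is that the gyroscopic term is power-free: $\Omega^\top(\Omega\times J\Omega)=0$, since $\Omega\times J\Omega$ is orthogonal to $\Omega$. This identity is what removes the only nonlinear, superlinear term. It is the one place where the argument could fail, and it is exactly the reason the damping $-J k_\Omega\Omega$ was injected in~\eqref{eq:taua}. Without that damping the energy would merely be conserved, and the system would not be ISS.

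What remains is
\[
\dot V = \Omega^\top\tau - k_\Omega\,\Omega^\top J\Omega \le -k_\Omega\lambda_{\min}(J)\|\Omega\|^2 + \|\Omega\|\,\|\tau\| .
\]
We then apply Young's inequality, $\|\Omega\|\|\tau\|\le \tfrac{\epsilon}{2}\|\Omega\|^2+\tfrac{1}{2\epsilon}\|\tau\|^2$, with $\epsilon = k_\Omega\lambda_{\min}(J)$. This yields
\[
\dot V \le -\tfrac12 k_\Omega\lambda_{\min}(J)\|\Omega\|^2 + \tfrac{1}{2k_\Omega\lambda_{\min}(J)}\|\tau\|^2 .
\]
Alternatively, we could use the gain-margin form: $\dot V\le -\tfrac12 k_\Omega\lambda_{\min}(J)\|\Omega\|^2$ whenever $\|\Omega\|\ge 2\|\tau\|/(k_\Omega\lambda_{\min}(J))$.

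Either form is a standard ISS-Lyapunov dissipation inequality with quadratic comparison functions. By the Sontag / Khalil (Thm.~4.19) characterization, the system is therefore (globally, in fact exponentially) ISS. Explicitly,
\[
\|\Omega(t)\| \le \sqrt{\tfrac{\lambda_{\max}(J)}{\lambda_{\min}(J)}}\, e^{-ct}\|\Omega(0)\| + \gamma \sup_{s\le t}\|\tau(s)\| ,
\]
where $c = \tfrac{k_\Omega\lambda_{\min}(J)}{2\lambda_{\max}(J)}$ and $\gamma$ is a linear gain. We will obtain this bound by the comparison lemma applied to $\dot V\le -2cV + \mathrm{const}\,\|\tau\|^2$.

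All steps are routine once the cross-product cancellation is noted; no step requires restricting to $\cX$, since~\eqref{eq:rot_dyn} does not involve the attitude.
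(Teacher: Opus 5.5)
Your argument is correct, but it takes a different route from the paper. The paper's proof is two citations. It takes global exponential stability of the unforced system ($\tau=0$) from Lemma VI.1 of the earlier fault-tolerant quadrotor work, and then invokes Khalil's Lemma 4.6 (a globally exponentially stable unforced system is ISS). You instead build an explicit ISS-Lyapunov function, the kinetic energy $V=\tfrac12\Omega^\top J\Omega$. You use $\Omega^\top(\Omega\times J\Omega)=0$ to remove the gyroscopic term, then close with Young's inequality and the comparison lemma (or Khalil's Theorem 4.19).

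Your route buys more than explicit constants. Khalil's Lemma 4.6 assumes the vector field is globally Lipschitz in $(x,u)$: its proof uses a converse Lyapunov function with $\|\partial V/\partial x\|\le c_4\|x\|$ globally, which needs a globally bounded Jacobian. The quadratic term $\Omega\times J\Omega$ in \eqref{eq:rot_dyn} has a Jacobian that grows linearly in $\|\Omega\|$, so the paper's citation does not literally apply, and the conclusion is correct only because the non-Lipschitz term happens to be power-free. Your computation makes exactly that fact the core of the proof. It yields a global exponential ISS estimate with rate $c=k_\Omega\lambda_{\min}(J)/(2\lambda_{\max}(J))$ and a linear gain, and it holds for any symmetric positive definite $J$, not only diagonal ones. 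The paper's version is shorter and modular, but it would need your direct argument (or a local-Lipschitz variant giving only local ISS) to be airtight.
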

\begin{proof}
    From~\cite[LemmaVI.1]{AkhWasNie2013}, the origin of the unforced case of~\eqref{eq:rot_dyn}, i.e. $\tau=0$, is globally exponentially stable. From~\cite[Lemma 4.6]{Kha02}, the dynamics~\eqref{eq:rot_dyn} is input-to-state stable.
\end{proof}
Under the QSF law \eqref{eq:ut} and \eqref{eq:tau}, the torque input $\tau$ 
in~\eqref{eq:rot_dyn} is bounded since $\nu_1$, $\nu_3$, 
and $z^\perp$ are bounded. Hence, by Lemma~\ref{lma:OM_bound}, $\Omega$ 
remains bounded.

\begin{thm}
\label{thm:main_result}
Consider the UAV model with a completely failed rotor in~\eqref{eq:UAV_fault} where $x \in \cX$. Let the path $\mathcal C$ be described as in~\eqref{eq:C}. Under the QSF law~\eqref{eq:ut} and~\eqref{eq:tau}, with $\nu_1=-K_1z^1$ and $\nu_2=-K_2z^2$, the path-following manifold $\mathcal X^\ast$ defined in~\eqref{eq:x-star} is locally exponentially stable and forward invariant. Moreover, if the tangential input $\nu_3$ is locally bounded, then the internal state $\xi$ and its time derivative $\dot{\xi}$ remain bounded.
\end{thm}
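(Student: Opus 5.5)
The argument works in the local coordinates $(z^\perp,z^\parallel,\xi)=\sigma(x)$ supplied by the diffeomorphism $\sigma$, which is valid on $\cX$ because $\det({\rm d}\sigma)\neq 0$ there. In these coordinates the statement splits into three parts: a linear transverse subsystem, a tangential subsystem assigned by $\nu_3$, and the internal dynamics~\eqref{eq:intr_dyn}.

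\emph{Step 1: the transverse subsystem.} Substituting the QSF law~\eqref{eq:ut} and~\eqref{eq:tau} into~\eqref{eq:UAV_fault} reproduces~\eqref{eq:nu2} and~\eqref{eq:nu13_def} identically. Hence $z^2$ and $z^1$ obey~\eqref{eq:z2} and~\eqref{eq:z1}. With $\nu_1=-K_1z^1$ and $\nu_2=-K_2z^2$, this gives the decoupled linear system
\begin{equation*}
\dot z^\perp=\mathrm{blkdiag}\bigl(A_{c4}-B_{c4}K_1,\;A_{c2}-B_{c2}K_2\bigr)z^\perp .
\end{equation*}
The system is Hurwitz, so $\|z^\perp(t)\|\le k e^{-\lambda t}\|z^\perp(0)\|$. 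This holds for as long as the trajectory stays in the chart domain, i.e.\ in $\cX\cap\sigma^{-1}(\cdot)$.

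\emph{Step 2: forward invariance.} If $z^\perp(0)=0$, uniqueness of solutions of the linear system gives $z^\perp(t)\equiv 0$. Therefore $\mathcal X^\ast=\{Z^\perp=0\}$ is forward invariant whenever the solution remains in $\cX$.

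\emph{Step 3: local exponential stability of the set.} I would take $\mathrm{dist}(x,\mathcal X^\ast)$ to be locally comparable with $\|Z^\perp(x)\|$. This follows because $\sigma$ is a local diffeomorphism and $\mathcal X^\ast$ is its coordinate slice $\{z^\perp=0\}$; on a compact neighbourhood, Lipschitz bounds on $\sigma$ and $\sigma^{-1}$ give $c_1\|z^\perp\|\le \mathrm{dist}(x,\mathcal X^\ast)\le c_2\|z^\perp\|$. The exponential decay of $z^\perp$ from Step 1 then transfers to exponential convergence to $\mathcal X^\ast$.

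\emph{Step 4: boundedness of $\xi$ and $\dot\xi$.} This is the main obstacle, because~\eqref{eq:intr_dyn} is only implicitly defined through $\sigma^{-1}$. The plan is as follows.
\begin{enumerate}
\item Since $z^\perp$ decays and $\nu_3$ is locally bounded, $\nu_1$, $\nu_2$ and $\nu_3$ are bounded. Along trajectories in a compact subset of $\cX$ (where $\langle\d h_2,R_3\rangle$ and $\det\tilde A$ are bounded away from zero), the torque~\eqref{eq:tau} is bounded.
\item Lemma~\ref{lma:OM_bound} (ISS of~\eqref{eq:rot_dyn}) then gives a bounded $\Omega$, and hence a bounded second component $\xi_2=J_1\Omega_1+J_3c_2\Omega_3$.
\item The first component $\xi_1=\psi$ lies on $\mathbb S^1$, so it is bounded trivially.
\item For $\dot\xi$, each term in~\eqref{eq:intr_dyn} is bounded: $e_3^\top W\Omega$ is bounded because $\phi$ stays away from $\pm\pi/2$ on $\cX$ and $\Omega$ is bounded; the quadratic term $\Omega\times J\Omega$ is bounded by boundedness of $\Omega$; and $u_t$ is bounded by~\eqref{eq:ut}, since $\beta_2$ depends on $v$ and $v$ stays bounded given bounded $z^\parallel$ and $z^\perp$ near the path.
\end{enumerate}

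The delicate point is that the ISS argument is circular. It needs the trajectory to remain in a compact subset of $\cX$, which in turn needs bounded states. I would close the loop locally: choose an initial condition close enough to $\mathcal X^\ast$, and use a continuity/bootstrap argument (the maximal time at which the trajectory leaves the compact set cannot be finite) to show that the trajectory never exits the region where the above bounds hold.
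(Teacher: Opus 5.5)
Your Steps 1--4 follow the paper's proof almost line by line: the Hurwitz block-diagonal transverse system, invariance by uniqueness, transfer through $\sigma$, then Lemma~\ref{lma:OM_bound} for $\Omega$, $\psi\in\mathbb S^1$, and bounds on $\dot\xi$ via $\sec\phi$, $\Omega$ and $u_t$. The circularity you flag is real, and the paper does not resolve it. It restricts to a regular neighbourhood of $\mathcal X^\ast$ and asserts that $\tau$ is bounded ``by continuity of the QSF law''. That presupposes the trajectory stays in a compact set, which is exactly what your bootstrap is meant to prove.

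\textbf{The gap is that your bootstrap, as sketched, cannot close.}

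\emph{Proximity to $\mathcal X^\ast$ does not help.} It only makes $z^\perp$ small. $\mathcal X^\ast$ is unbounded in the $\xi_2$ (equivalently $\Omega_3$) direction, so closeness to it says nothing about $\xi_2$. Whether $\xi_2$ stays bounded is a property of the zero dynamics on $\mathcal X^\ast$ itself.

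\emph{The closed-loop torque grows with $\Omega_3$.} Since $\dot R_3=\Omega_2R_1-\Omega_1R_2$, enforcing $h_1^{(4)}=\nu_1$ and $h_3^{(4)}=\nu_3$ prescribes $\dot\Omega_2+\Omega_1\Omega_3$ and $\dot\Omega_1-\Omega_2\Omega_3$. So~\eqref{eq:tau}, with $\tau_1=c_1u_t-c_2\tau_3$, produces torque terms proportional to $\Omega_1\Omega_3$ and $\Omega_2\Omega_3$. Hence $\sup_K\|\tau\|$ grows with the $\Omega$-radius of $K$. The linear ISS gain then cannot certify that the trajectory stays strictly inside $K$ unless $\Omega_1,\Omega_2$ are small relative to $k_\Omega$, and proximity to $\mathcal X^\ast$ does not supply that.

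\emph{The missing idea: bypass Lemma~\ref{lma:OM_bound} and use~\eqref{eq:intr_dyn} directly.} Two structural facts make this work:
\begin{itemize}
\item $\dot\xi_2$ contains no $\tau_2$ or $\tau_3$.
\item $\Omega_1,\Omega_2$ are pinned by the outputs. We have $\Omega_1^2+\Omega_2^2=\|\dot R_3\|^2$ and $u_t\dot R_3=-m(I-R_3R_3^\top)p^{(3)}$, where $u_t$, $R_3$ and $p^{(3)}$ are functions of $(z^\perp,z^\parallel)$ through $u_tR_3=m(\g e_3-\ddot p)$.
\end{itemize}
Substituting $\Omega_3=(\xi_2-J_1\Omega_1)/(c_2J_3)$ gives the scalar equation
\begin{equation*}
\dot\xi_2=-\Bigl(k_\Omega+\frac{(J_3-J_2)\Omega_2}{c_2J_3}\Bigr)\xi_2+\Bigl(\frac{(J_3-J_2)J_1}{c_2J_3}-c_2(J_2-J_1)\Bigr)\Omega_1\Omega_2+c_1u_t .
\end{equation*}
Therefore $\xi_2$, and with it $\Omega_3$, $\dot\psi$ and $\dot\xi$, stays bounded whenever two conditions hold:
\begin{itemize}
\item $z^\parallel$ stays bounded;
\item $|J_3-J_2|\,\|\dot R_3\|\le(k_\Omega-\epsilon)c_2J_3$ for some $\epsilon>0$.
\end{itemize}

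\emph{Both conditions are extra hypotheses.} Local boundedness of $\nu_3$ does not bound $z^\parallel$: a constant $\nu_3$ makes $h_3^{(3)}$ grow linearly. Your fourth item in Step 4 uses bounded $z^\parallel$ without stating it. The speed-tracking law of Remark~\ref{rmk:tangential_control} does supply it.

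\emph{Step 3 has the same issue.} The Lipschitz comparison between $\mathrm{dist}(x,\mathcal X^\ast)$ and $\|z^\perp\|$ holds only on compact pieces of the non-compact set $\mathcal X^\ast$.
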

\begin{proof}
Under the auxiliary inputs $\nu_1=-K_1z^1$ and $\nu_2=-K_2z^2$, the transverse dynamics are
\begin{equation}\label{eq:zperp_cl}
    \dot{z}^\perp
    =
    \begin{bmatrix}
        A_{c4}-B_{c4}K_1 & 0 \\
        0 & A_{c2}-B_{c2}K_2
    \end{bmatrix}
    z^\perp .
\end{equation}
Since both diagonal blocks are Hurwitz, the origin $z^\perp=0$ is exponentially stable. Hence, the path-following manifold
$
    \mathcal{X}^\ast
    =
    \{x\in\mathcal{X}:Z^\perp(x)=0\}
$
is locally exponentially stable in the transformed coordinates. Moreover, if $Z^\perp(0)=0$, then $z^\perp(t)=0$ for all $t\geq 0$, which establishes forward invariance of $\mathcal{X}^\ast$.
Since $\sigma$ is a local diffeomorphism on $\mathcal{X}$, local exponential stability of $z^{\perp} = 0$ in the transformed coordinates implies local exponential stability of $\mathcal{X}^*$ in 
the original coordinates $x$.
It remains to show boundedness of $\xi$ and $\dot{\xi}$. Since the result is local on the regular domain $\mathcal X$, we restrict attention to the neighborhood of $\mathcal X^\ast$ on which the coordinate transformation $\sigma$ is a diffeomorphism and the QSF law is well defined. In this neighborhood, the quantities appearing in the denominators of the controller and of ${\rm d}\sigma$, namely $u_t$, $\langle \d h_2,R_3\rangle$, $\cos\phi$, and $\langle \d h_1\times \d h_2,\d h_3\rangle$, are bounded away from zero. Therefore, by continuity of the QSF law and local boundedness of $\nu_3$, the resulting inputs $u_t$ and $\tau$ are locally bounded along closed-loop trajectories. By Lemma~\ref{lma:OM_bound}, boundedness of $\tau$ implies that $\Omega$ remains bounded.
The first component $\xi_1=\psi$ belongs to $\mathbb S^1$ and is therefore bounded modulo $2\pi$. 
Moreover, since $\cos\phi$ remains nonzero in the regular neighborhood, possibly after shrinking this neighborhood there exists $M_\phi>0$ such that $|\sec\phi|\leq M_\phi$.
Thus, $
    |\dot{\xi}_1|
    \leq
    M_\phi\bigl(|\Omega_1|+|\Omega_2|\bigr)+|\Omega_3|
    <\infty .
$
For the second component,
\(
    \xi_2=J_1\Omega_1+J_3c_2\Omega_3 ,
\)
and therefore
$
    |\xi_2|
    \leq
    J_1|\Omega_1|+J_3c_2|\Omega_3|
    <\infty .
$
Furthermore, from~\eqref{eq:intr_dyn},
$
    \dot{\xi}_2
    =
    -[1,0,c_2](\Omega\times J\Omega+k_\Omega J\Omega)+c_1u_t .
$
Since $\Omega$ and $u_t$ are bounded along the closed-loop trajectory, $\dot{\xi}_2$ is bounded.
Hence, both $\xi$ and $\dot{\xi}$ remain bounded.
\end{proof}
In summary, Theorem~\ref{thm:main_result} guarantees local exponential convergence to the desired path~(\ref{goal:1}) and forward invariance of the path-following manifold under rotor failure~(\ref{goal:2}). The tangential input $\nu_3$ assigns the desired motion along the path, such as point stabilization or path-speed tracking~(\ref{goal:3}), while boundedness of $\xi$ and $\dot{\xi}$ implies bounded yaw motion~(\ref{goal:4}). Thus, Problem~\ref{prob:Quasi-PF-Problem} is solved locally on the regular domain.
%%% Simulation %%%
\section{Simulation}
\label{sec:simulations}

\begin{figure}[ht]
    \centering
    \includegraphics[width=0.99\linewidth]{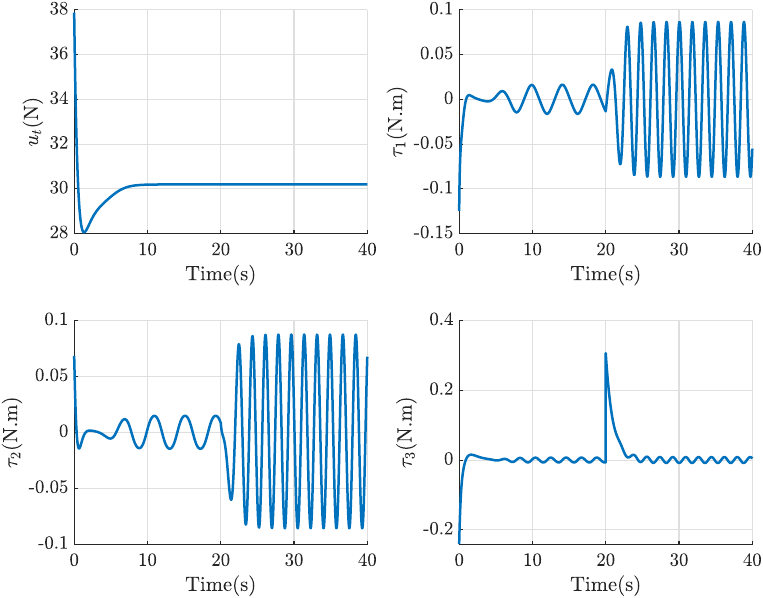}
    \vspace{- 0.5cm}
    \caption{Input time plots. Rotor failure occurs at $t=\SI{20}{\second}$.}
    \vspace{- 0.5cm}
    \label{fig:input}
\end{figure}
\begin{figure}[ht]
    \centering
    \includegraphics[width=0.99\linewidth]{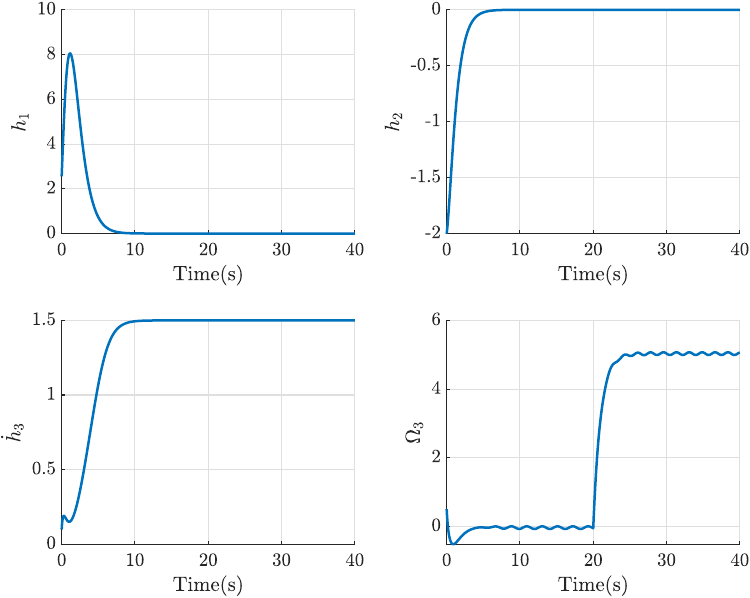}
    \vspace{- 0.5cm}
    \caption{Output time-plots along with $\Omega_3$. Rotor failure occurs at $t=\SI{20}{\second}$.}
    \vspace{- 0.5cm}
    \label{fig:output}
\end{figure}

This section illustrates the performance of the proposed QSF~\eqref{eq:ut},\eqref{eq:tau} using the Drake physics engine~\cite{drake}. The UAV is modeled in Drake, where the actual applied torque is given by~\eqref{eq:taua}. As seen in Fig.~\ref{fig:snapshots}, the UAV with four healthy rotors approaches a desired path, then experiences complete single-rotor failure mid-path. Upon failure, the fault-tolerant QSF takes over, and invariance of the path-following manifold ensures the UAV remains on the path. The UAV parameters are $m = \SI{3}{\kilogram}$, $J = \mathrm{diag}(0.03,0.03,0.06)\,\unit{\kilogram\cdot\meter\squared}$, rotor-to-CoM distance $\ell = \SI{0.2}{\meter}$, and drag-to-thrust ratio $c_d = 0.01$. The desired path is a horizontal circle of radius $d_1$ at height $d_2$, yielding output functions $h_1 = p_1^2+p_2^2-d_1^2$, $h_2 = p_3-d_2$, and $h_3 = d_1 \operatorname{atan2}(p_2,p_1)$, where $h_3$ enforces constant arc-length velocity $\dot{s}\to v_d$ (see Remark~\ref{rmk:tangential_control}). The simulation parameters are $d_1=\SI{5}{\meter}$, $d_2=\SI{-3}{\meter}$, and $v_d=\SI{1.5}{\meter\per\second}$. The control gains are chosen by placing eigenvalues at $\{-2.4,-2,-1.3,-1\}$, $\{-1,-2\}$, and $\{-1,-2,-3\}$ for the $z^1$, $z^2$, and $z^3$ subsystems, yielding $K_1=[6.24~16.76~16.22~6.7]$, $K_2=[2~3]$, and $K_3=[6~11~6]$, respectively. For the $z^3$-subsystem, $\dot{s}$ is directly controlled.  The initial conditions are $p(0)=\col{-0.7,1.75,0}\,\unit{\meter}$, $v(0)=\col{-1,2,0.5}\,\unit{\meter\per\second}$, $\eta(0)=\col{-0.1,0.2,1}\,\unit{\radian}$, and $\Omega(0)=\col{0.9,-0.1,0.5}\,\unit{\radian\per\second}$.

The simulation is shown in Fig.~\ref{fig:snapshots}. It unfolds in three stages\footnote{Code and animation at \href{https://github.com/gradslab/quasistatic-ftc}{https://github.com/gradslab/quasistatic-ftc}.}: path convergence (Figs.~\ref{fig:sub1}-\ref{fig:sub4}), nominal path-following (Figs.~\ref{fig:sub5}-\ref{fig:sub8}), and path-following under rotor failure at $t=\SI{20}{\second}$ (Figs.~\ref{fig:sub9}-\ref{fig:sub12}). In Stage 3, the invariant path-following manifold keeps the UAV on its path despite the failure. The only motion lost is yaw, which remains bounded and tunable via the damping coefficient $k_\Omega = \SI{1}{\second^{-1}}$. In the healthy four-rotor case, the controller of~\cite{LawAkh26} is used.

The input that results in this motion is shown in Fig.~\ref{fig:input}. When the failure occurs at $t=\SI{20}{\second}$, the thrust is unchanged. This is expected since the circle is horizontal and the UAV never leaves it. However, the rotor failure causes $\tau_1$ to be dependent on $\tau_2$, $\tau_3$, and the fault-tolerant QSF commands different signals for $\tau_2$, $\tau_3$.
As for the output, $h_1$, $h_2$ are shown in Fig.~\ref{fig:output}. Since this example drives $\dot{h}_3$ to a desired $v_d$, $\dot{h}_3$ is shown in Fig.~\ref{fig:output} instead of $h_3$. In addition, the body rate $\Omega_3$ is also plotted illustrating boundedness of the rotational motion about the $b_3$ axis even after rotor failure.

%%% Conclusions %%%
\section{Conclusions}
\label{sec:conclusion}
This paper presented a fault-tolerant QSF for a quadrotor UAV achieving feedback linearization without a dynamic controller in a set stabilization framework. The QSF design resulted in a static control law that renders motions on the path controlled-invariant, guaranteeing the UAV remains on its path under sudden rotor failure. Also, avoiding dynamic feedback yields considerable practical advantages. The proposed QSF guarantees exponential convergence of the UAV CoM to the desired path with bounded rotational dynamics, validated via the Drake physics engine.

\begingroup
\footnotesize
\bibliographystyle{IEEEtran}
\bibliography{references}
\endgroup

\end{document}